\begin{document}

\title{KATKA: A KRAKEN-like tool\\with $k$ given at query time}
\author{Travis Gagie\inst{1} \and Sana Kashgouli\inst{1} \and Ben Langmead\inst{2}}
\institute{Dalhousie University \and Johns Hopkins University}
\maketitle

\begin{abstract}
We describe a new tool, KATKA, that stores a phylogenetic tree $T$ such that later, given a pattern $P [1..m]$ and an integer $k$, it can quickly return the root of the smallest subtree of $T$ containing all the genomes in which the $k$-mer $P [i..i + k - 1]$ occurs, for $1 \leq i \leq m - k + 1$.  This is similar to KRAKEN's functionality but with $k$ given at query time instead of at construction time.
\end{abstract}

\section{Introduction}

KRAKEN~\cite{WS14,WLL19} is a popular tool that addresses the basic problem of determining where a fragment of DNA occurs in the Tree of Life, which arises for every sequencing read in a metagenomic dataset.  KRAKEN takes a phylogenetic tree $T$ and an integer $k$ and stores $T$ such that later, given a pattern $P [1..m]$, it can quickly return the root of the smallest subtree of $T$ containing all the genomes in which the $k$-mer $P [i..i + k - 1]$ occurs, for $1 \leq i \leq m - k + 1$.  For example, if $T$ is the small phylogenetic tree shown in Figure~\ref{fig:tree}, $k = 3$, and $P = \mathtt{TAGACA}$, then KRAKEN returns
\begin{itemize}
\item 8 for {\tt TAG} (which occurs in {\tt GATTAGAT} and {\tt GATTAGATA}),
\item 6 for {\tt AGA} (which occurs in {\tt AGATACAT}, {\tt GATTAGAT} and {\tt GATTAGATA}),
\item NULL for {\tt GAC} (which does not occur in $T$),
\item 2 for {\tt ACA} (which occurs in {\tt GATTACAT}, {\tt AGATACAT} and {\tt GATACAT}).
\end{itemize}
Notice that not all the genomes in the subtree returned for $P [i..i + k]$ need contain it: {\tt AGA} does not occur in {\tt GATTACAT} or {\tt GATACAT}.

\begin{figure}[t]
\begin{center}
\includegraphics[width=.35\textwidth]{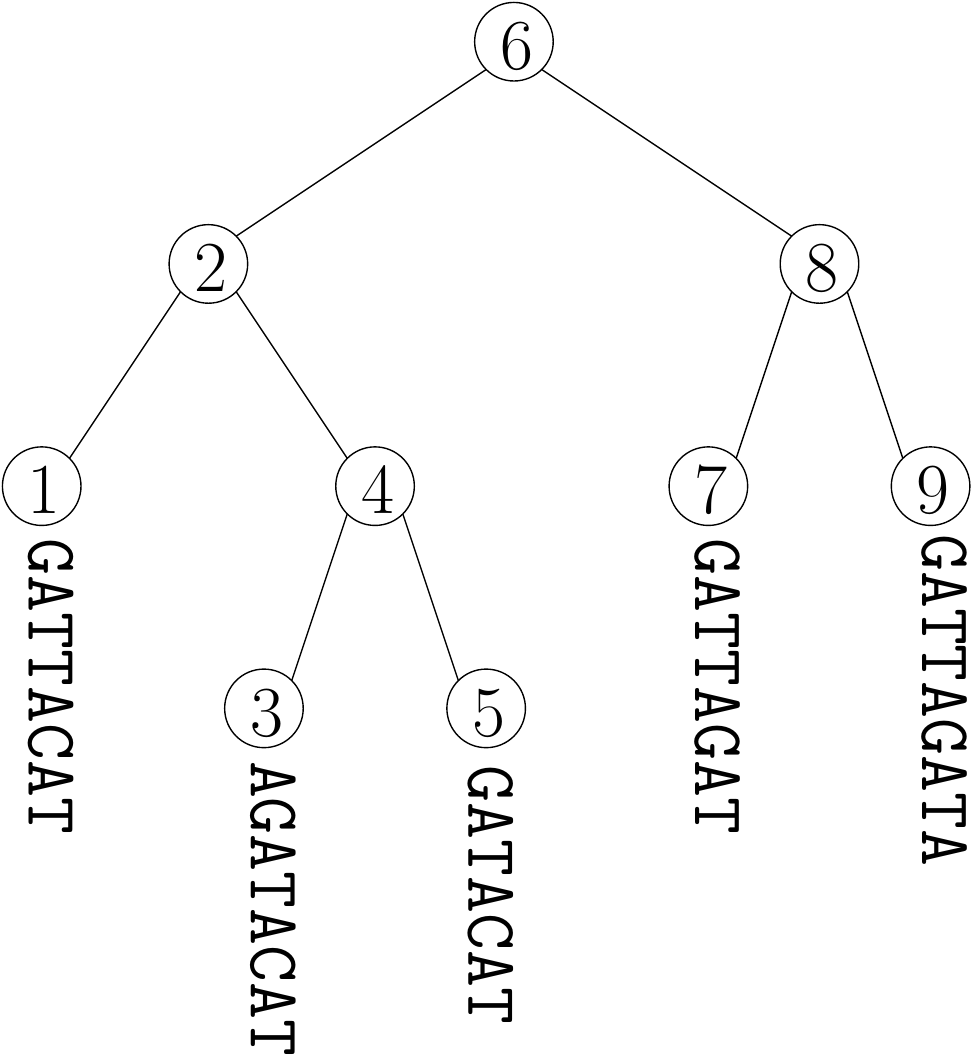}
\caption{A small phylogenetic tree.}
\label{fig:tree}
\end{center}
\end{figure}

KRAKEN is widely used in metagenomic analyses, especially taxonomic classification, but there are some applications for which we would rather give $k$ at query time instead of at construction time.  For example, Nasko et al.~\cite{NKPT18} showed that ``the [reference] database composition strongly influence[s] the performance'', with larger $k$ values generally working better as the database grows.  When the representation of strains or species in the database is skewed, therefore, it may be hard to choose a single $k$ that works well for all of them.  In this paper we describe a new tool, KATKA, that allows $k$ to be chosen at query time.  We are still optimizing, testing and extending KATKA and will report experimental results in a future paper.

\section{Design}

To simplify our presentation, in this paper we assume $T$ is binary (although our approach generalizes to higher-degree trees).  KATKA consists of three main components:
\begin{itemize}
\item a modified LZ77-index~\cite{KN13} for the concatenation of the genomes in $T$, in the order they appear from left to right in $T$ and separated by copies of a special character {\tt \$};
\item a modified LZ77-index for the reverse of that concatenation;
\item a lowest common ancestor (LCA) data structure for $T$.
\end{itemize}

Given $P [1..m]$ and $k$, we use the first and second indexes to find the leftmost and rightmost genomes in $T$, respectively, that contain the $k$-mer $P [i..i + k - 1]$, for $1 \leq i \leq m - k + 1$; we then use the LCA data structure to find the lowest common ancestor of those two genomes.  Since the two indexes are symmetric and the LCA data structure takes only about $2$ bits per vertex in $T$ and has constant query time, we describe only the first index.

To build the index for the concatenation, we compute its LZ77 parse and consider the phrases and co-lexicographically sort the set of their maximal non-empty suffixes not containing {\tt \$}, and consider the suffixes of the concatenation starting at phrase boundaries and lexicographically sort the set of their maximal prefixes not containing {\tt \$} (including the empty string $\varepsilon$ after the last phrase boundary).  We discard any of those maximal prefixes that do not occur starting at a phrase boundary immediately preceded by one of those maximal suffixes.

For our example, if the concatenation is
\[\mathtt{GATTACAT\$AGATACAT\$GATACAT\$GATTAGAT\$GATTAGATA}\,,\]
then its LZ77 parse is
\[\mathtt{G\ A\ T\ TA\ C\ AT\$\ AG\ ATA\ CAT\$G\ ATACAT\$GATT\ AGAT\$\ GATTAGATA}\,,\]
the co-lexicographically sorted set of maximal suffixes is
\[\mathtt{A}, \mathtt{TA}, \mathtt{ATA}, \mathtt{GATTAGATA}, \mathtt{C}, \mathtt{G}, \mathtt{AG}, \mathtt{T}, \mathtt{GATT}\,,\]
and the lexicographically sorted set of maximal prefixes is
\[\begin{array}{c}
\varepsilon, \mathtt{AGAT}, \mathtt{AGATACAT}, \mathtt{AT}, \mathtt{ATACAT}, \mathtt{ATTACAT}, \mathtt{CAT},\\
\mathtt{GATTACAT}, \mathtt{GATTAGATA}, \mathtt{TACAT}, \mathtt{TTACAT},
\end{array}\]
but we discard {\tt GATTACAT}, {\tt AGATACAT} and {\tt GATTAGATA} because they do not occur starting at a phrase boundary immediately preceded by one of the maximal suffixes.

We build a grid with the number $\ell$ at position $(x, y)$ if the genome at the $\ell$th vertex from the left in $T$ is the first one in which there is a phrase boundary immediately preceded by the co-lexicographically $x$th of the maximal suffixes and immediately followed by the lexicographically $y$th of the maximal prefixes.  Notice this grid will be of size at most $z \times z$ with at most $z$ numbers on it, where $z$ is the number of phrases in the LZ77 parse of the concatenation.  Figure~\ref{fig:grid} shows the grid for our example.

\begin{figure}[t]
\begin{center}
\begin{tabular}{|@{\hspace{1ex}}ccccccccc@{\hspace{1ex}}|l}
\rotatebox{-90}{\tt \textcolor{white}{--------}A\ } &
\rotatebox{-90}{\tt \textcolor{white}{-------}TA\ } &
\rotatebox{-90}{\tt \textcolor{white}{------}ATA\ } &
\rotatebox{-90}{\tt \textcolor{white}{}GATTAGATA\ } &
\rotatebox{-90}{\tt \textcolor{white}{--------}C\ } &
\rotatebox{-90}{\tt \textcolor{white}{--------}\textcolor{red}{G}\ } &
\rotatebox{-90}{\tt \textcolor{white}{-------}A\textcolor{red}{G}\ } &
\rotatebox{-90}{\tt \textcolor{white}{--------}T\ } &
\rotatebox{-90}{\tt \textcolor{white}{-----}GATT\ } \\
\hline
&&&9&&&&&&    \ $\varepsilon$\\
&&&&&&&&7& \tt\ AGAT\\
&&&&1&&&&& \tt\ \textcolor{red}{AT}\\
&&&&&\textcolor{red}{5}&\textcolor{red}{3}&&& \tt\ \textcolor{red}{AT}ACAT\\
&&&&&\textcolor{red}{1}&&&& \tt\ \textcolor{red}{AT}TACAT\\
&1&3&&&&&&& \tt\ CAT\\
&&&&&&&1&& \tt\ TACAT\\
1&&&&&&&&& \tt\ TTACAT\\
\hline
\end{tabular}
\caption{The grid we build for the concatenation in our example.}
\label{fig:grid}
\end{center}
\end{figure}

We store data structures such that given strings $\alpha$ and $\beta$, we can quickly find the minimum number in the box $[x_1, x_2] \times [y_1, y_2]$ on the grid, where $[x_1, x_2]$ is the co-lexicographic range of the maximal suffixes ending with $\alpha$ and $[y_1, y_2]$ is the lexicographic range of the maximal prefixes starting with $\beta$.  (For the index for the reversed concatenation, we find the maximum in the query box.)  In our example, if $\alpha = \mathtt{G}$ and $\beta = \mathtt{AT}$, then we should find 1.

For example, we can store Patricia trees for the compact tries for the reversed maximal suffixes and the maximal prefixes, together with a data structure supporting fast sequential access to the concatenation starting at any phrase boundary.  In the literature (see~\cite{Nav16} and references therein), the latter is usually an augmented straight-line program (SLP) for the concatenation; if the genomes in $T$ are similar enough, however, then in practice it could probably be simply a VCF file.  (We note that we can reuse the access data structure for the index for the reversed concatenation, augmented to support fast sequential access also at phrase boundaries in the reverse of the concatenation.)  Figure~\ref{fig:tries} shows the compact tries for our example, with each black leaf indicating that one of the strings in the set ended at the parent of that leaf.

\begin{figure}[t]
\begin{center}
\includegraphics[width=.9\textwidth]{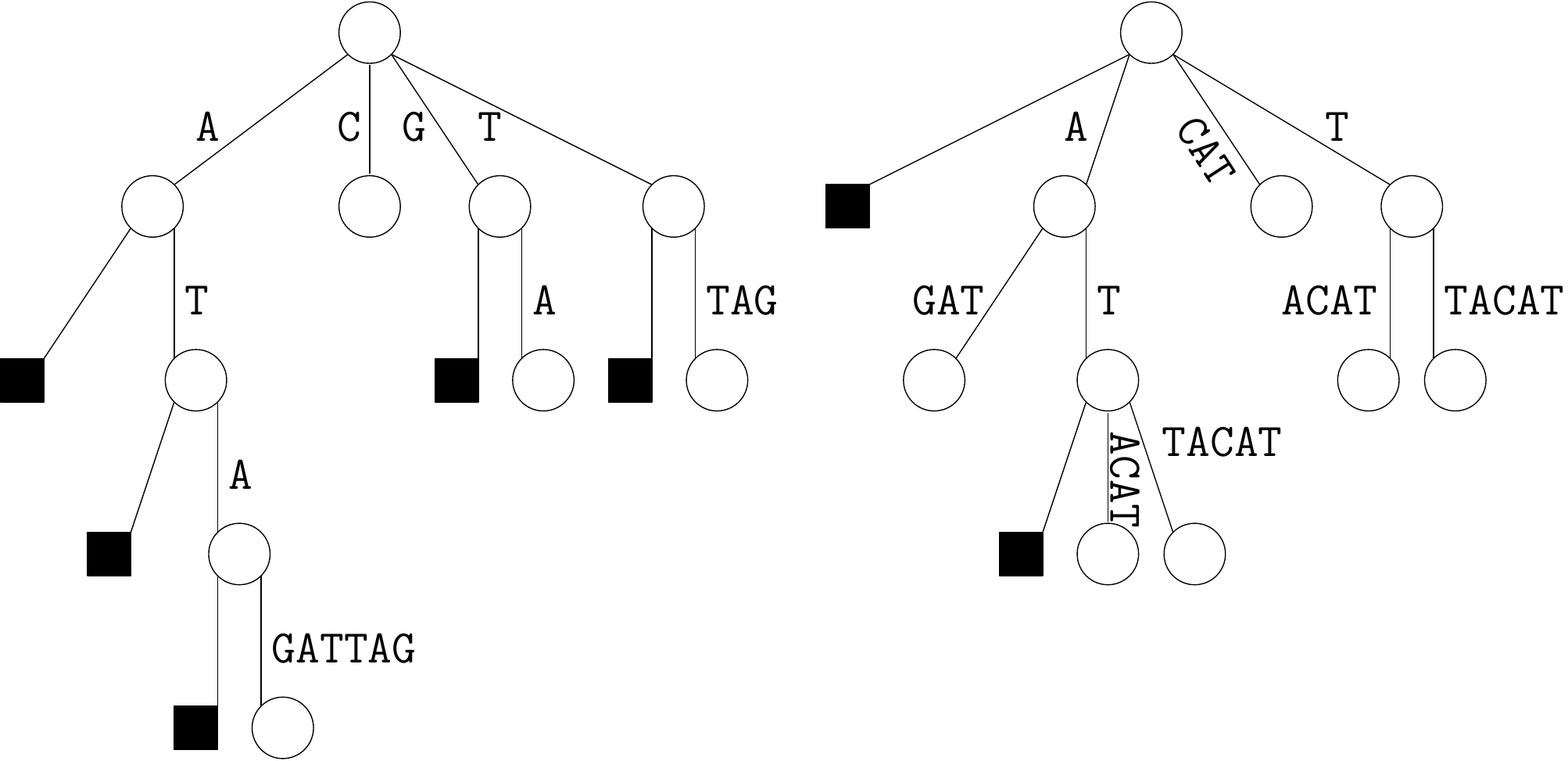}
\caption{The compact tries for the concatenation in our example.}
\label{fig:tries}
\end{center}
\end{figure}

Nekrich~\cite{Nek21} recently showed how to store the grid in $O (z)$ space and support 2-dimensional range-minimum queries in $O (\log^{\epsilon} z)$ time, for any constant $\epsilon > 0$.  For simplicity, we consider his data structure in our analysis even though we are not aware of any implementation yet.

\section{Queries}

Given a pattern $P [1..m]$ and an integer $k$, for every substring $P [i..j]$ of $P$ with length at most $k$, we find and verify the locus for the reverse of $P [i..j]$ in the compact trie for the reversed maximal suffixes, and the locus for $P [i..j]$ in the compact trie for the maximal prefixes.  (Patricia trees can return false positives when the sought pattern does not occur, so we must verify the loci by, for example, extracting their path labels from the SLP.)

By combining the searches for $P [i], P [i..i + 1], \ldots, P [i..i + k - 1]$, we make a total of $O (m)$ descents in the Patricia trees, each to a string-depth of at most $k$; we extract $O (m)$ substrings from the concatenation, each of length at most $k$ and starting at a phrase boundary, to verify the loci.  With care, this takes a total of $O (k m)$ time in the worst case.  When searching standard LZ77-indexes in practice, however, ``queries often die in the Patricia trees''~\cite{Nav??} --- because of mismatches between characters in the pattern and the first characters in edge labels --- which speeds up queries.

For each $k$-mer $P [i..i + k - 1]$ in $P$ and each way to split $P [i..i + k - 1]$ into a non-empty prefix $P [i..j]$ and a suffix $P [j + 1..i + k - 1]$, we use a 2-dimensional range-minimum query to find the minimum number in the box for $\alpha = P [i..j]$ and $\beta = P [j + 1..i + k - 1]$ in $O (\log^\epsilon z)$ time.

By the definition of the LZ77 parse, the first occurrence of $P [i..i + k - 1]$ in the concatenation crosses or ends at a phrase boundary.  It follows that, by taking the minimum of the minima we find, in $O (k \log^\epsilon z)$ time we find the leftmost genome in $T$ that contains $P [i..i + k - 1]$.  Repeating this for every value of $i$ takes $O (k m \log^\epsilon z)$ time.

By storing symmetric data structures for the reverse of the concatenation and querying them, we can find the rightmost genome in $T$ that contains $P [i..i + k - 1]$, for $1 \leq i \leq m - k + 1$.  With the LCA data structure for $T$, we can find the lowest common ancestor of the two genomes, which is the root of the smallest subtree of $T$ containing all the genomes in which the $k$-mer $P [i..i + k - 1]$ occurs.

For our example, if $P = \mathtt{TAGACA}$ and $k = 3$, then we find and verify the loci for
\[\mathtt{T}, \mathtt{A}, \mathtt{AT}, \mathtt{G}, \mathtt{GA}, \mathtt{GAT}, \mathtt{A}, \mathtt{AG}, \mathtt{AGA}, \mathtt{C}, \mathtt{CA}, \mathtt{CAG}, \mathtt{A}, \mathtt{AC}, \mathtt{ACA}\]
in the compact trie for the reversed maximal suffixes, and the loci for
\[\mathtt{A}, \mathtt{AG}, \mathtt{AGA}, \mathtt{G}, \mathtt{GA}, \mathtt{GAC}, \mathtt{A}, \mathtt{AC}, \mathtt{ACA}, \mathtt{C}, \mathtt{CA}, \mathtt{A}\]
in the compact trie for the maximal prefixes.

For $P [1..3] = \mathtt{TAG}$, we look up the minimum number 7 in the box for $\alpha = \mathtt{T}$ and the locus $\beta = \mathtt{AGAT}$ for {\tt AG}; since {\tt G} has no locus in the compact trie for the maximal prefixes and {\tt GAT} has no locus in the compact trie for the maximal reversed suffixes, we correctly conclude that the leftmost genome in $T$ containing {\tt TAG} is at vertex 7.  A symmetric process with the index for the reversed concatenation tells us the rightmost genome in $T$ containing {\tt TAG} is at vertex 9, and then an LCA query tells us that vertex 8 is the root of the smallest subtree containing all the genomes in which {\tt TAG} occurs.

\begin{theorem}
Given a phylogenetic tree $T$ whose $g$ genomes have total length $n$, we can store $T$ in $O (z \log n + g / \log n)$ space, where $z$ is the number of phrases in the LZ77 parse of the concatenation of the genomes in $T$ (separated by copies of a special character), such that when given a pattern $P [1..m]$ and an integer $k$, for $1 \leq i \leq m - k + 1$ we can find the root of the smallest subtree of $T$ containing all genomes in which the $k$-mer $P [i..i + k - 1]$ of $P$ occurs, in $O (k m \log^\epsilon z)$ total time.
\end{theorem}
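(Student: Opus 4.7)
The plan is to assemble the space and time bounds from the components described in Section 2, and then combine the 2D range-minimum results with the LCA data structure's guarantees. For the space bound, I would argue that each of the two modified LZ77 indexes fits in $O(z \log n)$ space: the compact-trie Patricia trees take $O(z)$ words; Nekrich's grid structure takes $O(z)$ words and supports 2D range-minimum queries in $O(\log^\epsilon z)$ time; and the shared augmented SLP for sequential access at phrase boundaries (and, symmetrically, at phrase boundaries in the reversed concatenation) takes $O(z \log n)$ words. The LCA data structure adds $2g$ bits, which fits in $O(g / \log n)$ machine words. Summing these yields the claimed $O(z \log n + g / \log n)$.

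For the time bound, I would handle each step separately. First, the Patricia-tree work: by extending the descent for $P[i..i+j]$ to that for $P[i..i+j+1]$ rather than restarting, we make only $O(m)$ descents per Patricia tree, each to string-depth at most $k$, and verify each resulting collection of loci with a single SLP-based extraction of length at most $k$ per starting position $i$; this totals $O(km)$ time across both indexes. Second, the range-minimum work: for each of the $m - k + 1$ starting positions $i$ and each of the $O(k)$ ways to split $P[i..i+k-1]$ into a non-empty prefix and suffix, we perform one query in $O(\log^\epsilon z)$ time, giving $O(km \log^\epsilon z)$ in total. Third, the $O(m)$ LCA queries cost $O(1)$ each. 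The range-minimum work dominates.

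For correctness I would invoke the defining property of the LZ77 parse: the first (respectively, last) occurrence in the concatenation of any substring crosses or ends at (respectively, begins at) a phrase boundary. Hence the minimum over all non-empty prefix/suffix splits of the grid entries for $\alpha$ the prefix and $\beta$ its continuation identifies the leftmost genome that contains the $k$-mer; the symmetric maximum on the reversed-concatenation index identifies the rightmost; and the LCA of those two genomes in $T$ is the root of the smallest subtree containing every genome in which the $k$-mer occurs.

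The main obstacle is the Patricia-tree descent argument: naively, the $\Theta(km)$ substrings under consideration would cost $\Omega(k^2 m)$ to descend and verify. The saving relies on reusing the parent descent each time the query is lengthened by one character, and on performing only one length-$k$ SLP extraction per starting position $i$ rather than one per substring. One must therefore confirm that the chosen SLP access data structure really supports the amortized extraction at $O(1)$ per character once a phrase boundary is located, and that the extend-by-one-character operation on the Patricia tree is charged correctly, so that the Patricia work is swallowed by the $O(km \log^\epsilon z)$ range-minimum term.
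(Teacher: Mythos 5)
Your proposal is correct and follows essentially the same route as the paper's proof: it charges the $2g+o(g)$-bit LCA structure as $O(g/\log n)$ words, the bookmarked SLP as $O(z\log n)$, the Patricia trees and Nekrich's range-minimum grid as the remaining terms, and lets the $O(km)$ range-minimum queries at $O(\log^\epsilon z)$ each dominate the $O(km)$ descent-and-extraction work, with correctness resting on the fact that the first occurrence of any substring crosses or ends at a phrase boundary. The one imprecision is your claim that the Patricia trees and grid for the \emph{reversed} concatenation also take $O(z)$ words: the reverse of the concatenation has its own LZ77 parse whose phrase count is only bounded by $O(z\log n)$, which is how the paper accounts for those components --- this does not change the final space bound, but the intermediate $O(z)$ claim is unjustified as stated.
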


\begin{proof}
The LCA data structure takes $2 g + o (g)$ bits, which is $O (g / \log n)$ words (assuming $\Omega (\log n)$-bit words).  An SLP for the concatenation with bookmarks permitting sequential access with constant overhead from the phrase boundaries in the parses of the concatenation and its reverse, takes $O (z \log n)$ space.  For the concatenation, the Patricia trees and the instance of Nekrich's 2-dimensional range-minimum data structure take $O (z)$ space; for the reverse of the concatenation, they take space proportional to the number of phrases in its LZ77 parse, which is $O (z \log n)$.  In total, we use $O (z \log n + g / \log n)$ space.  As we have described, we make $O (m)$ descents in the Patricia trees, each to string-depth at most $k$, and extract only $O (m)$ substrings, each of length at most $k$, from the concatenation and its reverse.  The time is dominated by the $O (k m)$ range-minimum queries, which take $O (\log^\epsilon z)$ time each.
\end{proof}

\section{Future Work}

In addition to optimizing and testing KATKA, we are also investigating adapting it to work with maximal exact matches (MEMs) instead of $k$-mers.  For example, if we store $O (z)$-space z-fast tries~\cite{BBPV10} for the Patricia trees then, for each way to split $P$ into a non-empty prefix $P [1..i]$ and a suffix $P [i + 1..m]$, we can find the loci of $P [1..i]$ reversed and $P [i + 1..m]$ in $O (\log m)$ time.  We can verify those loci in $O (\log n)$ time by augmenting the SLP to return fingerprints, without changing its $O (z \log n)$ space bound~\cite{BGHSVV17}.

With an $O (z)$-space data structure supporting heaviest induced ancestor queries in $O \left( \frac{\log^2 z}{\log \log z} \right)$ time~\cite{AHGT22,GGN13,Gao22}, in that time we can find the longest substring $P [h..j]$ with $h \leq i \leq j$ that occurs in $T$ with $P [h..i]$ immediately to the left of a phrase boundary and $P [h + 1..j]$ immediately to its right.  Note $P [h..j]$ must be a MEM.  With 2-dimensional range-minimum and range-maximum queries, we can find the indexes of the leftmost and rightmost genomes in which $P [h..i]$ occurs immediately to the left of a phrase boundary and $P [h + 1..j]$ immediately to its right.  We still use a total of $O (z \log n + g / \log n)$ space and now we use a total of $O \left( m \left( \frac{\log^2 z}{\log \log n} + \log n \right) \right)$ time.

Unfortunately, we may not find every MEM this way: it may be that, for some MEM $P [h..j]$ and every $i$ between $h$ and $j$, either $P [h..j]$ is not split into $P [h..i]$ and $P [i + 1..j]$ by any phrase boundary or some longer MEM is split into $P [h'..i]$ and $P [i + 1..j']$ by a phrase boundary.  For any MEM we do not find, however, we do find another MEM at least as long that overlaps it.

A more serious drawback to this scheme is that it is probably quite impractical (for example, we are not aware of any implementation of a data structure supporting fast heaviest induced ancestor queries, either).  Fortunately, there is probably a simple and practical solution --- although possibly without good worst-case bounds --- since the problems we are considering are similar to those covered in Subsection 7.1 of Navarro's~\cite{Nav14} survey on wavelet trees.

Finally, we are investigating adapting results~\cite{Nav19} using LZ77-indexes for document-listing, in order to find the number of genomes in $T$ in which each $k$-mer of $P$ occurs.  It is easy to store a small data structure that reports the number of genomes in the smallest subtree for a $k$-mer, so we may be able to determine what fraction contain that $k$-mer.

\subsection*{Acknowledgments}

Many thanks to Nathaniel Brown, Younan Gao, Simon Gog, Meng He, Finlay Maguire and Gonzalo Navarro, for helpful discussions.

\end{document}